\theoremstyle{plain}
\newtheorem{theorem}{Theorem}[section]
\newtheorem{definition}[theorem]{Definition}
\newtheorem{lemma}[theorem]{Lemma}
\newtheorem{corollary}[theorem]{Corollary}
\newtheorem{example}[theorem]{Example}
\newtheorem{proposition}[theorem]{Proposition}
\newtheorem*{openproblem}{Problem}
\newcommand{\CSP}{\operatorname{CSP}}
\begin{document}

\title[New algorithm for Maltsev CSPs]{A new algorithm for constraint satisfaction problems with Maltsev templates}

\author{Dejan Deli\'c}
\address{Department of Mathematics, Ryerson University,  Canada}
\email{ddelic@ryerson.ca}

\author{Aklilu Habte}
\address{Deaprtment of Mathematics, Ryerson University, Canada}
\email{a2habte@ryerson.ca}

\thanks{The first author gratefully acknowledges support by the Natural Sciences and
Engineering Research Council of Canada in the form of a Discovery Grant.}

\begin{abstract} 
In this article, we provide a new algorithm for solving constraint satisfaction problems with Maltsev constraints, based on the new notion of Maltsev consistency.
\end{abstract}

\maketitle

\section*{Introduction}
\noindent One of the fundamental problems in constraint programming and, more widely, in the field of artificial intelligence, is the problem of determining the computational complexity of constraint satisfaction problems (CSPs, for short). The problem, in its full generality, is \textsc{NP}-complete but many of its subclasses are tractable with algorithms which are well understood. In this paper, we adopt the following to studying the complexity of CSPs: we study the restrictions of  the instances by allowing a fixed set of constraint relations. This approach is generally referred to in the literature as a \emph{constraint language} or, a \emph{fixed template}  (\cite{b-j-k}). This point of view has lead to significant progress in the study of the complexity of constraint satisfaction in the past 15 years or so.

The first general result obtained by this approach was Schaefer's dichotomy for Boolean CSPs. Schaefer proved that CSPs arising from constraint languages over 2-element domains are either solvable in polynomial time or \textsc{NP}-complete. The algebraic approach has subsequently yielded a number of important results. Among others, A. Bulatov \cite{bul3} extended
Schaefer's~\cite{sch} result on 2-element domains to prove the CSP  
dichotomy conjecture for 3-element domains. Barto and
Kozik~\cite{b-k2} gave a complete algebraic description of the
constraint languages over finite domains that are solvable by local
consistency methods (these problems are said to be of \emph{bounded
  width}) and as a consequence it is decidable to determine whether a
constraint language can be solved by such methods. 

The proof of the Bounded Width Conjecture by L. Barto and M. Kozik is a typical example of the result which demonstrates the influence universal algebra has exerted over the study of parametrized CSPs. The main problem driving the algebraic approach to the study of parametrized CSPs can be stated in the following way: given a constraint language $\Gamma$, what classes of operations preserving the relations in $\Gamma$ guarantee the existence of a ``nice" algorithm solving the problem? 

One of the early landmark results in this direction were the proof of A. Bulatov,  and a substantially simplified version of the same result by  A. Bulatov and V. Dalmau (\cite{bulatov2006simple}) concerning the existence of such an algorithm for a fairly general, yet rather natural class of parametrized CSPs, those whose parametrizing algebra is \emph{Maltsev}, i.e. in which the constraint language is invariant under an algebraic operation satisfying the condition $m(x,x,y)\approx m(y,x,x)\approx y$, for all elements $x$ and $y$ of the algebra. Their algorithm, also known as the Generalized Gaussian Elimination, provided a common generalization for already known algorithms for solving CSPs over affine domains, CSPs on finite groups with near subgroups, etc. The result of Bulatov and Dalmau is based on the algebraic fact that, given any Maltsev algebra $\mathbb{A}$, any subpower of $\mathbb{A}^n$ has a generating set of polynomial (in fact, linear) size in $n$. This approach was further generalized in \cite{BIMMVW} to show that a modification of Bulatov-Dalmau algorithm solves all CSPs over the so called domains \emph{with few subpowers}, i.e. for all the domains $\mathbb{A}$ with the property that any subpower of $\mathbb{A}^n$ has a generating set of polynomial size in $n$. 

The algorithms presented in \cite{bulatov2006simple} and \cite{BIMMVW} require explicit knowledge of the algebraic operations witnessing the few subpowers property. This, in itself, may be viewed as problematic, since, in practical applications, the CSP is generally presented in the form of its constraint language and computing the required term is a highly nontrivial problem in terms of its complexity. Secondly, those algorithms do not provide ``short" proofs of unsatisfiability in the same way local consistency checks do. Finally, there is increasing evidence that the solvability of a CSP is intimately linked to the solvability of very particular subinstances, namely, the subinstances over finite simple algebras in the variety of the parametrizing algebra, while the Generalized Gaussian Elimination algorithm does not make use of the structure theory of congruence permutable or congruence modular varieties. The algorithm we present in this article is largely driven by the attempts of the authors to address some of these issues and gain better understanding of how the structural theory of Maltsev varieties influences the solvability of Maltsev CSPs.

\section{Preliminaries}

\subsection{Constraint Satisfaction Problem}
The notion of a constraint satisfaction problem provides us with a natural framework for a variety of problems which require simultaneous satisfiability of a number of conditions on a given set of variables. More formally,

\begin{definition} An \emph{instance} of the CSP is a triple $\mathcal{I}=(V,A,\mathcal{C})$, where $V=\{x_1,\ldots,x_n\}$ is a finite set of \emph{variables}, $A$ is a finite domain for the variables in $V$, and $\mathcal{C}$ is a finite set of \emph{constraints} of the form $C=(S,R_S)$, where $S$, the \emph{scope} of the constraint, is a $k$-tuple of variables $(x_{i_1},\ldots,x_{i_k})\in V^k$ and $R_S$ is a $k$-ary relation $R_S\subseteq A^k$, called the \emph{constraint relation} of $C$.

A \emph{solution} for the instance $\mathcal{I}$ is any assignment $f:V\rightarrow A$, such that, for every constraint $C=(S,R_S)$ in $\mathcal{C}$, $f(S)\in R_S$.
\end{definition}

A relational structure $\mathbf{A}=(A, \Gamma)$, defined over the domain $A$ of the instance $\mathcal{I}$, where $\Gamma$ is a finite set of relations on $A$, is often referred to as a \emph{constraint language}, and the relations from $\Gamma$ form the signature of $\mathbf{A}$. An instance of $\CSP (\mathbf{A})$ will be an instance of the CSP such that all constraint relations belong to $\mathbf{A}$. 

%We can now formulate the constraint satisfaction problem $\CSP (\mathbf{A})$ as the following decision problem:

%\medskip
%\noindent \textbf{Constraint satisfaction problem for $\mathbb A$.}\\
%\fbox{\parbox{0.67\textwidth}{ $\mathbf{CSP}(\mathbb A)$ \hrule
%    \medskip
 %   INSTANCE: A finite ${\Gamma}$-structure $\mathbf{X}$.\\
 %   QUESTION: Is there a homomorphism from $\mathbf{X}$ to
  %  $\mathbf{A}$?}}  \medskip

\subsection{Basic Algebraic Concepts}\label{sec:Algebra}
\noindent In this subsection, we introduce concepts from universal algebra which will be used in the remainder of the paper. Two good refrences for a more in-depth overview of universal algebra are \cite{Burris1981} and \cite{bergman}.

An \emph{algebra} is an ordered pair $\mathbb{A}=(A, F)$, where $A$ is a nonempty set, the \emph{universe} of $\mathbb{A}$, while $F$ is the set of \emph{basic operations} of $\mathbb{A}$, consisting of functions of arbitrary, but finite, arities on $A$. The list of function symbols and their arities is the \emph{signature} of $\mathbb{A}$. 

A \emph{subuniverse} of the algebra $\mathbb{A}$ is a nonempty subset $B\subseteq A$ closed under all operations of $\mathbb{A}$. If $B$ is a subuniverse of $\mathbb{A}$, by restricting all operations of $\mathbb{A}$ to $B$, such a subuniverse is a \emph{subalgebra} of $\mathbb{A}$, which we denote $\mathbb{B}\leq \mathbb{A}$.

If $\mathbb{A}_i$ is an indexed family of algebras of the same signature, the product $\prod_i \mathbb{A}_i$ of the family is the algebra whose universe is the Cartesian products of their universes $\prod_i A_i$ endowed with the basic operations which are coordinatewise products of the corresponding operations in $\mathbb{A}_i$. If $\mathbb{A}$ is an algebra, its $n$-th Cartesian power will be denoted $\mathbb{A}^n$.

An equivalence relation $\alpha$ on the universe $A$ of an algebra $\mathbb{A}$ is a \emph{congruence} of $\mathbb{A}$, if $\alpha \leq \mathbb{A}^2$, i.e. if $\alpha$ is preserved by all basic operations of $\mathbb{A}$. In that case, one can define the algebra $\mathbb{A}/\alpha$, the \emph{quotient of} $\mathbb{A}$ \emph{by} $ \alpha$, with the universe consisting of all equivalence classes (cosets) in $A/\alpha$ and whose basic operations are induced by the basic operations of $\mathbb{A}$. The $\alpha$-congruence class containing $a\in A$ will be denoted $a/\alpha$.

An algebra $\mathbb{A}$ is said to be \emph{simple} if its only congruences are the trivial, diagonal relation $0_\mathbb{A}=\{(a,a)\, \vert \, a\in A\}$ and the full relation $1_\mathbb{A}=\{ (a,b)\, \vert\, a,b\in A\}$. 

Any subalgebra of a Cartesian product of algebras $\mathbb{A}\leq \prod_i \mathbb{A}_{i\in I}$ is equipped with a family of congruences arising from projections on the product coordinates. We denote $\pi_i$ the congruence obtained by identifying the tuples in $A$ which have the same value in the $i$-th coordinate. Given any $J\subseteq I$, we can define a subalgebra of $\mathbb{A}$, $proj_J(\mathbb{A})$, which consists of the projections of all tuples in $A$ to the coordinates from $J$. If $\mathbb{A}\leq \prod_{i\in I} \mathbb{A}_i$ is such that $proj_i (\mathbb{A})=\mathbb{A}_i$, for every $i\in I$, we say that $\mathbb{A}$ is a \emph{subdirect product} and denote this fact $\mathbb{A}\leq_{sp} \prod_{i\in I} \mathbb{A}_i$.

If $\mathbb{A}$ and $\mathbb{B}$ are two algebras of the same signature, a mapping from $A$ to $B$ which preserves all basic operations is a \emph{homomorphism}. An \emph{isomorphism} is a bijective homomorphism between two algebras of the same signature.

Given an algebra $\mathbb{A}$, a \emph{term} is a syntactical object describing a composition of basic operations of $\mathbb{A}$. A \emph{term operation} $t^\mathbb{A}$ of $\mathbb{A}$ is the interpretation of the syntactical term $t(x_1,\ldots,x_m)$ as an $m$-ary operation on $A$, according to the formation tree of $t$.

A \emph{variety} is a class of algebras of the same signature, which is closed under the class operators of taking products, subalgebras, and homomorphic images (or, equivalently, under the formation of quotients by congruence relations.) The variety $\mathcal{V}(\mathbb{A})$ generated by the algebra $\mathbb{A}$ is the smallest variety containing $\mathbb{A}$. Birkhoff's theorem states (see \cite{Burris1981}) states that every variety is an equational class; that is, every variety $\mathcal{V}$ is uniquely determined by a set of identities (equalities of terms) $s\approx t$ so that $\mathbb{A}\in\mathcal{V}$ if and only if $\mathbb{A}\models s\approx t$, for every identity $s\approx t$ in the set.

An \emph{$n$-ary operation} on a set $A$ is a mapping
$f:A^n\rightarrow A$; the number $n$ is the \emph{arity} of $f$.  Let
$f$ be an $n$-ary operation on $A$ and let $k>0$. We write $f^{(k)}$
to denote the $n$-ary operation obtained by applying $f$ coordinatewise on
$A^k$. That is, we define the $n$-ary operation $f^{(k)}$ on $A^k$ by
\[
f^{(k)}(\mathbf a^1,\dots,\mathbf
a^n)=(f(a^1_1,\dots,a^n_1),\dots,f(a^1_k,\dots,a^n_k)),
\]
for $\mathbf a^1,\dots, \mathbf a^n\in A^k$.

The notion of \emph{polymorphism} plays the central role in the 
algebraic approach to the $\CSP$. 

\begin{definition}
  Given an $\Gamma$-structure $\mathbf{A}$, an $n$-ary
  \emph{polymorphism} of $\mathbf{A}$ is an $n$-ary operation $f$ on
  $A$ such that $f$ preserves the relations of $\mathbf A$. That is,
  if $\mathbf{a}^1,\dots,\mathbf{a}^n\in R$, for some $k$-ary relation
  $R$ in $\Gamma$, then $f^{(k)}(\mathbf a^1,\dots,\mathbf
  a^n)\in R$.  
\end{definition}

If a relational structure $\mathbf{A}$ is a core, one can construct a structure $\mathbf{A}'$ from $\mathbf{A}$ by adding, for each element $a\in A$, a unary constraint relation $\{a\}$. This enables us to further restrict the algebra of polymorphisms associated with the template; namely, if $f(x_1,\ldots,x_m)$ is an $m$-ary polymorphism of $\mathbf{A}'$, it is easy to see that
$f(a,a,\ldots,a)=a,$
for all $a\in A$. In addition to this, the constraint satisfaction problems with the templates $\mathbf{A}$ and $\mathbf{A}'$ are logspace equivalent. Therefore, we may assume that the algebra of polymorphisms associated to any CSP under consideration is \emph{idempotent}; i.e. all its basic operations $f$ satisfy the identity
$$f(x,x,\ldots,x)\approx x.$$

\begin{definition} A ternary operation $m:A^3\rightarrow A$ on a finite set is said to be \emph{Maltsev} if it satisfies the following algebraic identities
$$m(x,x,y)\approx m(y,x,x)\approx y.$$
\end{definition}

\begin{example}
A typical example of a constraint satisfaction problem over a finite Maltsev template is the problem of solving a system of linear equations in $n$ variables over a fixed finite field $K$, and let $S$ be its solution space, viewed as an $n$-ary relation on $K$. The operation $m(x,y,z)=x-y+z$ is a polymorphism of the relational structure $\mathbf{S}=(K; S)$. The converse is also true: namely, one can show that any $n$-ary relation on $K$, for $n\geq 1$, which has $m(x,y,z)$ as its polymorphism is a solution of some system of linear equations over $K$ in $n$ variables.
\end{example}

\subsection{Simple Idempotent Algebras in Maltsev Varieties}

Let $\mathbb{A}$ be an algebra. We say that $0\in A$ is an \emph{absorbing element} for $\mathbb{A}$ if, for every $(k+1)$-ary term operation $t(x,\bar{y})$, such that $t^\mathbb{A}$ depends on the variable $x$, the following holds for every $\bar{a}\in A^k$:
$$t^\mathbb{A}(0,\bar{a})=0.$$
We remark here that the property of being an absorbing element is stronger than the requirement that $\{0\}$ be an absorbing subuniverse of $\mathbb{A}$.

Given any finite power of an algebra $\mathbb{A}$, say $\mathbb{A}^n$, for $n\geq 2$, and any $n$ congruences $\theta_1,\theta_2,\ldots,\theta_n\in Con(\mathbb{A})$, the binary relation defined on $A^n$ by
$$((a_1,a_2,\ldots,a_n),(b_1,b_2,\ldots,b_n))\in \theta_1\times\theta_2\times\ldots\times\theta_n$$
if and only if $(a_i,b_i)\in \theta_i$, for all $i=1,\ldots,n$, is a congruence on $\mathbb{A}^n$. Therefore,
$$Con(\mathbb{A}_1)\times Con(\mathbb{A}_2)\times\ldots\times Con(\mathbb{A}_n)\subseteq Con(\mathbb{A}^n).$$
We say that a simple algebra $\mathbb{A}$ is \emph{congruence skew-free} if the equality holds, i.e. if 
$$Con(\mathbb{A}^n)\cong \mathbf{2}^n,$$
for every $n\geq 1$, where $\mathbf{2}$ is a two-element lattice.

In order to construct our algorithm, we will need to have complete understanding of simple idempotent Maltsev algebras. Our first step in this direction is the characterization of all idempotent simple algebras.

\begin{theorem} (K. Kearnes, \cite{Kearnes}) If $\mathbb{A}$ is an idempotent simple algebra, then exactly one of the following conditions is true:

\begin{enumerate}
\item $\mathbb{A}$ has a unique absorbing element.
\item $\mathbb{A}$ is Abelian.
\item $\mathbb{A}$ is congruence skew-free.
\end{enumerate}
\end{theorem}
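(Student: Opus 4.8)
The plan is to prove the statement using tame congruence theory (Hobby--McKenzie). For a finite simple algebra $\mathbb{A}$ the monolith interval $I[0_\mathbb{A},1_\mathbb{A}]$ in the congruence lattice carries a well-defined type $\mathbf{typ}(0_\mathbb{A},1_\mathbb{A})\in\{1,2,3,4,5\}$, and $\mathbb{A}$ is Abelian exactly when $[1_\mathbb{A},1_\mathbb{A}]=0_\mathbb{A}$, which happens precisely when the type lies in $\{1,2\}$. I would first dispose of the easy half, namely that the three conditions are pairwise incompatible. An Abelian algebra is never skew-free: in the affine case ($\mathbf{typ}=2$) the diagonal of $\mathbb{A}^2$ together with a coordinate congruence generates a skew congruence, and a similar non-product congruence exists in the unary case ($\mathbf{typ}=1$). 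An algebra with a unique absorbing element $0$ must be non-Abelian, since a term $t(x,\bar y)$ that genuinely mixes its variables and satisfies $t^\mathbb{A}(0,\bar a)=0$ falsifies the term condition, and it is not skew-free either, because that same absorbing term produces a non-product congruence on $\mathbb{A}^2$. Hence at most one of (1)--(3) can hold.

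For existence I would split on Abelianness. If $[1_\mathbb{A},1_\mathbb{A}]=0_\mathbb{A}$ then $\mathbb{A}$ is Abelian and condition (2) holds with nothing further to prove. Otherwise $[1_\mathbb{A},1_\mathbb{A}]=1_\mathbb{A}$, so $\mathbf{typ}(0_\mathbb{A},1_\mathbb{A})\in\{3,4,5\}$, and the goal becomes to show that type $5$ forces a unique absorbing element while types $3$ and $4$ force congruence skew-freeness.

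The core of the argument is this non-Abelian analysis, carried out through the $\langle 0_\mathbb{A},1_\mathbb{A}\rangle$-minimal sets and their traces. When $\mathbf{typ}=5$ (semilattice), each trace carries an induced semilattice operation; using idempotency together with simplicity I would globalise this to a binary idempotent term $\wedge$ behaving like meet, and then use simplicity to collapse the congruence it generates, producing a single $\wedge$-absorbing element $0$. One must then verify $t^\mathbb{A}(0,\bar a)=0$ for \emph{every} term $t(x,\bar y)$ depending on its first variable, the annihilation propagating from the traces across all of $A$. When $\mathbf{typ}\in\{3,4\}$ (Boolean or lattice), the traces induce a two-element Boolean algebra or lattice; these generate congruence-distributive varieties, and I would conclude skew-freeness by exhibiting J\'onsson terms and using congruence distributivity to show that every congruence on $\mathbb{A}^n$ is a product of coordinate congruences, thereby ruling out skew congruences.

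The hard part will be precisely this non-Abelian existence step: extracting a global structural conclusion from purely local minimal-set data. In the type-$5$ case the delicate point is upgrading the local semilattice absorption into a \emph{single} element that annihilates all essentially non-unary terms simultaneously; in the type-$3$/$4$ case it is upgrading local distributivity to genuine skew-freeness of \emph{all} finite powers. Both require the full machinery of the theory of minimal sets and of the modular commutator, and I would not expect any short combinatorial shortcut to replace them.
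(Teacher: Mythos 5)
The paper does not actually prove this statement---it is imported verbatim from Kearnes \cite{Kearnes} with no proof supplied---so the only question is whether your sketch would stand on its own. It would not, for two reasons. First, the entire mathematical content of the theorem is concentrated in the step you explicitly defer: showing that a non-Abelian simple idempotent algebra either has a unique absorbing element or is congruence skew-free. Your write-up names this as ``the hard part'' and offers no argument for it, so what remains is the comparatively routine pairwise-exclusivity check plus a case split on $[1_\mathbb{A},1_\mathbb{A}]$; that is an outline, not a proof. Second, the two routes you propose for the hard step are themselves problematic. In the type~$3$/$4$ case you want to ``conclude skew-freeness by exhibiting J\'onsson terms and using congruence distributivity,'' but the type of the single prime quotient $\langle 0_\mathbb{A},1_\mathbb{A}\rangle$ is local data about traces and does not yield J\'onsson terms or congruence distributivity of $\mathcal{V}(\mathbb{A})$: an idempotent simple algebra of type $3$ or $4$ may well contain (in a subalgebra or a power) a two-element semilattice, which kills any J\'onsson chain, yet Kearnes's theorem still asserts skew-freeness for it. So this reduction is a non sequitur, not merely an omitted verification. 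In the type~$5$ case, the induced semilattice structure on a trace is a \emph{polynomial} structure on a two-element subset; there is no general mechanism for promoting it to a binary idempotent \emph{term} operation on all of $A$, and the absorbing element of the theorem need not be the zero of any term semilattice. You correctly identify this globalization as delicate, but identifying a difficulty is not the same as resolving it.

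Two smaller points. Tame congruence theory applies to finite algebras, whereas the statement as quoted (and Kearnes's original theorem) has no finiteness hypothesis; in the context of this paper everything is finite, but you should say so if you lean on TCT. Also, your exclusivity argument should engage with the word ``unique'' in condition (1): for an essentially unary Abelian simple algebra such as a two-element set, \emph{every} element satisfies the absorbing-element condition as defined in Section~1, so the separation of cases (1) and (2) rests on uniqueness, not merely on the existence of an absorbing element contradicting the term condition.
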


It is easy to see that any algebra $\mathbb{A}$ which has an absorbing element cannot be Maltsev. Namely, suppose $a\in A$ is an absorbing element and let $m(x,y,z)$ be  a Maltsev polymorphism of $\mathbb{A}$. Then, for any $b\in A$, $b\neq a$, 
$$m(a,a,b)=a,$$
which is a contradiction. Therefore, the only possibilities for simple algebras in Maltsev varieties are Abelian algebras and the congruence skew-free ones.

Abelian idempotent simple algebras have an even more specific universal algebraic characterization:
 
\begin{theorem} (M. Valeriote, \cite{Valeriote}) Every simple Abelian  algebra is strictly simple, i.e. it contains no proper nontrivial subalgebras.
\end{theorem}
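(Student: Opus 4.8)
The plan is to exploit the fact that, in the present setting, Abelianness coincides with affineness. Since $\mathbb{A}$ lives in a Maltsev variety, it is congruence permutable, hence congruence modular, and so by the commutator theory of congruence modular varieties every Abelian algebra is \emph{affine}: it is polynomially equivalent to a module $\mathbb{M}$ over a ring $R$, with the Maltsev operation realized as $m(x,y,z)=x-y+z$. (Alternatively one could argue via tame congruence theory: congruence modularity forces $\mathbb{A}$ to omit the unary type $\mathbf 1$, while a simple Abelian algebra has type $\mathbf 1$ or $\mathbf 2$, so the type must be the affine $\mathbf 2$.) Idempotence then pins down the clone of $\mathbb{A}$ as consisting exactly of the idempotent module polynomials, i.e.\ the affine combinations $\sum_i r_i x_i$ over $\mathbb{M}$ with $\sum_i r_i = 1$.

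First I would establish the decisive correspondence between subalgebras and congruences. Fix a subalgebra $\mathbb{B}\leq\mathbb{A}$, choose $b\in B$, and set $N=\{\,s-b : s\in B\,\}$. Using only closure of $B$ under $m$ and under the idempotent affine terms, a short computation shows that $N$ is a subgroup of $(\mathbb{M},+)$ containing $0$ that is closed under every term operation; thus $N$ is a submodule (relative to the operations present) and $\mathbb{B}=b+N$ is one of its cosets. In other words, \emph{every subalgebra is a coset of a submodule}. Dually, for any such submodule $N$ the partition of $A$ into cosets of $N$ is a congruence $\theta_N$: if $x_i-y_i\in N$ for all $i$, then for any term $t=\sum_i r_i x_i$ with $\sum_i r_i=1$ one has $t(\bar x)-t(\bar y)=\sum_i r_i(x_i-y_i)$, an affine combination of elements of $N$, hence again in $N$ because $N$ is itself a subalgebra.

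With this in hand the conclusion is immediate. Suppose, for contradiction, that $\mathbb{B}$ is a proper subalgebra with $|B|\geq 2$. Writing $\mathbb{B}=b+N$ as above, $|B|\geq 2$ gives $N\neq 0$, while $B\neq A$ gives $N\neq M$. Then $\theta_N$ is a congruence of $\mathbb{A}$ with $0_{\mathbb{A}}\subsetneq\theta_N\subsetneq 1_{\mathbb{A}}$, contradicting the simplicity of $\mathbb{A}$. Hence the only subalgebras of $\mathbb{A}$ are the singletons (the cosets of $N=0$) and $\mathbb{A}$ itself (the coset of $N=M$), which is exactly the assertion that $\mathbb{A}$ is strictly simple.

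The main obstacle, and the step carrying the real content, is the reduction to the affine case: the equivalence of Abelianness and affineness is the deep input, which I would either cite from commutator theory for congruence modular varieties or extract from tame congruence theory as indicated. Once affineness is in place the remainder is essentially bookkeeping: one must check that (i) $N$ is closed under all term operations, so that subalgebras are genuinely cosets of submodules even when the clone is a proper reduct of the full affine clone, and (ii) the coset partition $\theta_N$ respects every term. Both verifications hinge only on $N$ being closed under affine combinations, and I expect no difficulty beyond this once the affine representation is available.
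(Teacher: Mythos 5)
The paper does not actually prove this statement; it is quoted as a black-box result of Valeriote, so there is no in-paper argument to compare yours against. Judged on its own terms, your argument is a correct and essentially standard proof of the \emph{special case} that the paper actually uses, namely idempotent Abelian algebras lying in a Maltsev variety: there Abelianness does imply affineness (Herrmann/Gumm), the unique Maltsev term operation of an affine algebra is $x-y+z$, and your two verifications go through --- $N=B-b$ is a subgroup closed under all term operations because $m(n,n',0)=n-n'$ and $t(\bar n)=t(\bar s)-b$ when $\sum_i r_i=1$, and the coset partition $\theta_N$ is a congruence because $t(\bar x)-t(\bar y)=t(x_1-y_1,\dots,x_k-y_k)\in N$. (Your phrase that idempotence pins down the clone as ``exactly'' the idempotent module polynomials is an overstatement --- it is only a containment --- but you correct for this later and the argument needs only the containment.)

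The genuine gap is with respect to the theorem as stated. Valeriote's theorem asserts that \emph{every} finite simple Abelian algebra is strictly simple, with no congruence-modularity or idempotence hypothesis, and that generality is exactly where the difficulty lies: outside congruence modular varieties, Abelian (in the term-condition sense) does \emph{not} imply affine, so the reduction you lean on --- whether cited from commutator theory or extracted from the ``CM forces omitting type $\mathbf 1$'' argument --- is unavailable; both of your proposed routes presuppose congruence modularity. A simple Abelian algebra in general may have TCT type $\mathbf 1$, and handling that case is the real content of Valeriote's proof. So your write-up should either (a) explicitly weaken the statement to ``every idempotent simple Abelian algebra in a Maltsev variety is strictly simple,'' which is all this paper needs and which your argument does establish, or (b) acknowledge that the general statement requires the substantially harder tame-congruence-theoretic argument and cannot be obtained by the affine reduction alone.
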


The structure of strictly simple idempotent Abelian algebras is well understood  (see e.g. \cite{szendrei}) and is closely related to modules over rings of matrices whose entries come from some fixed finite field:

A finite idempotent Abelian algebra $\mathbb{A}$ is strictly simple if and only if there exist a finite field $K$ and a finite-dimensional vector space $V$ over $K$ such that $\mathbb{A}$ is term equivalent to the algebra
$$(V; x-y+z, \{\lambda x+ (1_K-\lambda )y \, \vert  \, \lambda \in K\})$$
where $+$ is the addition of vectors, $1_K$ is the multiplicative identity of the field $K$, and $\lambda x$ is the scalar multiplication by $\lambda \in K$ in $V$. 

Simple congruence skew-free Maltsev algebras, unfortunately, do not allow for such a nice representation theorem. However, such algebras possess a rich polymorphic structure which, as we will see later, places any CSP parametrized by such an algebra in logarithmic space.

By a $k$-ary \emph{polynomial} of an algebra $\mathbb{A}$, we mean any function $p: A^k\rightarrow A$ which is obtained from a $(k+l)$-ary polymorphism $f(x_1,\ldots x_k,x_{k+1},\ldots,x_{k+l})$ and $a_1,\ldots, a_l\in A$, so that 
$$p(x_1,\ldots,x_k)=f^{\mathbb{A}}(x_1,\ldots,x_k,a_1,\ldots, a_l).$$

\begin{definition} An algebra $\mathbb{A}$, not necessarily idempotent, is said to be \emph{functionally complete}, if every finitary function on $\mathbb{A}$ is expressible by a polynomial.
\end{definition}

\begin{proposition} (H. Werner; for a proof see \cite{Burris1981}) If $\mathbb{A}$ is a finite algebra such that $\mathcal{V}(\mathbb{A})$ is Maltsev, then $\mathbb{A}$ is congruence skew-free if, and only if, $\mathbb{A}$ is functionally complete.
\end{proposition}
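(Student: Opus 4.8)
The plan is to prove both implications by passing through the Galois correspondence between operations and the relations they preserve, with the Maltsev term doing the real work on the harder side. Two preliminary facts will be used throughout. First, functional completeness of $\mathbb{A}$ forces $\mathbb{A}$ to be simple: since every unary map on $A$ is then a polynomial and every congruence is preserved by all polynomials, a pair $(a,b)\in\theta$ with $a\ne b$ together with the polynomial sending $a\mapsto c,\ b\mapsto d$ yields $(c,d)\in\theta$ for all $c,d$, so $\theta=1_{\mathbb{A}}$. Second, a purely clone-theoretic fact: the relations preserved by the clone $\mathcal{O}_A$ of \emph{all} operations on $A$ are exactly the ``equality type'' relations, those defined by a conjunction of equalities among coordinates; moreover an equality type relation that happens to be a congruence of $\mathbb{A}^n$ is necessarily a product congruence $\theta_1\times\cdots\times\theta_n$, and when $\mathbb{A}$ is simple each $\theta_i\in\{0_{\mathbb{A}},1_{\mathbb{A}}\}$, so such a congruence is precisely an element of $\mathbf{2}^n$.

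The bridge between the two directions is the elementary observation that the coordinatewise application of a polynomial of $\mathbb{A}$ is again a polynomial of $\mathbb{A}^n$, and polynomials preserve congruences; hence for each $\theta\in Con(\mathbb{A}^n)$ the $2n$-ary relation $R_\theta=\{(\bar a,\bar b):\bar a\mathbin\theta\bar b\}$ is preserved by every polynomial of $\mathbb{A}$. With this, the implication ``functionally complete $\Rightarrow$ congruence skew-free'' is immediate: by functional completeness \emph{every} operation on $A$ is a polynomial, so each $R_\theta$ is preserved by all of $\mathcal{O}_A$ and is therefore of equality type; by the facts above, together with simplicity of $\mathbb{A}$, each $\theta$ is a product congruence, giving $Con(\mathbb{A}^n)\cong\mathbf{2}^n$ for every $n$.

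For the converse I would invoke the dual, and harder, statement, which is where the Maltsev term is indispensable: for a finite algebra possessing a Maltsev term, an operation on $A$ lies in $\Pol(\mathbb{A})$ as soon as it preserves every congruence of every finite power $\mathbb{A}^n$. Granting this interpolation principle, the argument closes at once. If $\mathbb{A}$ is congruence skew-free, then every $\theta\in Con(\mathbb{A}^n)$ is a product congruence, so every $R_\theta$ is an equality type relation and is consequently preserved by all of $\mathcal{O}_A$. Thus every operation on $A$ preserves all congruences of all powers of $\mathbb{A}$, and the interpolation principle forces every operation to be a polynomial, i.e.\ $\mathbb{A}$ is functionally complete.

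The main obstacle is precisely the hard direction of this interpolation principle. Its proof exploits the Maltsev term $m$ to coordinatize compatible reflexive relations: from the pairs $(b,b),(a,b),(a,a)$ the coordinatewise $m$-image is $(b,a)$ (and transitivity is obtained similarly), so that every reflexive compatible binary relation is already a congruence, and more generally the relations preserved by $\Pol(\mathbb{A})$ are governed by the congruences of the powers of $\mathbb{A}$. This is the technical heart supplied by the cited reference; it is, morally, the relational counterpart of the fact recalled in the introduction that subpowers of Maltsev algebras admit small generating sets, and it is the step I expect to demand the most care.
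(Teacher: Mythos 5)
The paper itself gives no proof of this proposition---it is quoted with a citation to Burris--Sankappanavar---so your attempt has to stand on its own. Your forward direction does: functional completeness forces simplicity via unary polynomials; the relations invariant under \emph{all} operations on $A$ are exactly the diagonal (equality-type) relations; a reflexive equality-type $2n$-ary relation can only identify each coordinate $i$ with its partner $n+i$, hence is a product congruence with factors in $\{0_{\mathbb{A}},1_{\mathbb{A}}\}$; and since every $\theta\in Con(\mathbb{A}^n)$ is preserved by the coordinatewise action of every polynomial, hence by every operation, you correctly get $Con(\mathbb{A}^n)\cong\mathbf{2}^n$.

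The converse, however, has a genuine gap: the ``interpolation principle'' is not a deferrable lemma, because in the only case where you apply it, it \emph{is} the statement to be proved. When $\mathbb{A}$ is congruence skew-free, every congruence of every finite power is equality-type and is therefore preserved by \emph{every} operation on $A$; the hypothesis of your principle is thus vacuous, and its conclusion---every such operation is a polynomial---is verbatim the assertion of functional completeness. Nothing has been reduced. The supporting sketch (reflexive compatible binary relations are congruences, via $m$) cannot carry the weight either: by the Galois connection for finite algebras, membership of a $k$-ary $f$ in the polynomial clone is governed by the diagonal-containing subuniverse $\Gamma_k\leq\mathbb{A}^{|A|^k}$ whose elements are the tables of the $k$-ary polynomials; this relation has arity $|A|^k$, not $2$, and the binary symmetrization argument says nothing about it. What is actually required is a proof that $\Gamma_k=A^{|A|^k}$, i.e.\ an analysis of the subdirect power generated by the projection tuples and the constants inside $\mathbb{A}^{|A|^k}$, using simplicity, congruence permutability and skew-freeness (for instance via the decomposition of subdirect products of simple Maltsev algebras as in Theorem \ref{maltsev}, or via the Foster--Pixley primality criterion applied to $\mathbb{A}$ with all constants adjoined); skew-freeness enters precisely to exclude the Abelian obstruction visible in your own $\mathbb{Z}_2$-style examples. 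As written, your argument establishes only one of the two implications.
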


Finally, we state some facts about subdirect products of Maltsev algebras, which will be needed later.

The first fact concerns the connectivity in subdirect products of simple Maltsev algebras. For the proof, see e.g. \cite{Burris1981}

\begin{theorem} \label{maltsev} Let $\mathbb{A}_1,\ldots,\mathbb{A}_n$ be simple algebras in a Maltsev variety. If
$$\mathbb{B}\leq_{sp} \mathbb{A}_1\times\ldots\times\mathbb{A}_n$$
is a subdirect product, then
$$\mathbb{B}\cong \mathbb{A}_{i_1}\times\ldots\times\mathbb{A}_{i_k}$$
for some $\{i_1,\ldots,i_k\}\subseteq \{1,\ldots,n\}$. 

In particular, if $\mathbb{A}$ and $\mathbb{B}$ are two Maltsev algebras then any subdirect product
$$\mathbb{C}\leq_{sp} \mathbb{A}\times\mathbb{B}$$ 
is either the direct product or the graph of an isomorphism $f:\mathbb{A}\rightarrow\mathbb{B}$.
\end{theorem}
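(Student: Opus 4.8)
The plan is to argue by induction on the number $n$ of factors, organized around the projection congruences $\pi_1,\dots,\pi_n$ on $\mathbb{B}$, where $\pi_i=\ker(\operatorname{proj}_i)$ as in the notation fixed earlier. Two structural observations drive everything. First, since $\operatorname{proj}_i(\mathbb{B})=\mathbb{A}_i$ is simple, the correspondence theorem applied to $\mathbb{B}/\pi_i\cong\mathbb{A}_i$ shows that the only congruences of $\mathbb{B}$ lying above $\pi_i$ are $\pi_i$ and $1_\mathbb{B}$; that is, each $\pi_i$ is a coatom of $\operatorname{Con}(\mathbb{B})$. Second, the subdirectness of $\mathbb{B}$ is exactly the statement $\bigcap_{i=1}^n \pi_i = 0_\mathbb{B}$.

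The key lemma I would isolate is the standard decomposition via complementary factor congruences, and this is the step where the Maltsev hypothesis does the real work: if $\alpha,\beta\in\operatorname{Con}(\mathbb{B})$ satisfy $\alpha\cap\beta=0_\mathbb{B}$ and $\alpha\vee\beta=1_\mathbb{B}$, then the map $b\mapsto(b/\alpha,b/\beta)$ is an isomorphism $\mathbb{B}\cong\mathbb{B}/\alpha\times\mathbb{B}/\beta$. That this map is a homomorphism is formal, and injectivity is immediate from $\alpha\cap\beta=0_\mathbb{B}$. Surjectivity is the crux, and it is precisely here that permutability enters: by Maltsev's theorem a Maltsev variety is congruence permutable, so $\alpha\vee\beta=\alpha\circ\beta$, and $\alpha\circ\beta=1_\mathbb{B}$ means that for any prescribed classes $x/\alpha$ and $y/\beta$ I can interpolate a witness $z$ with $x\mathrel{\alpha}z\mathrel{\beta}y$, which maps to the prescribed pair. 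I expect this interpolation to be the main obstacle in spirit: without permutability the join $\alpha\vee\beta$ need not coincide with the relational composite $\alpha\circ\beta$, and the decomposition genuinely fails.

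With the lemma in hand the induction runs cleanly. Put $\alpha=\pi_n$ and $\beta=\bigcap_{i<n}\pi_i$, so that $\alpha\cap\beta=\bigcap_{i=1}^n\pi_i=0_\mathbb{B}$. Because $\alpha$ is a coatom, $\alpha\vee\beta$ is either $1_\mathbb{B}$ or $\alpha$. In the first case the lemma gives $\mathbb{B}\cong\mathbb{A}_n\times(\mathbb{B}/\beta)$, and $\mathbb{B}/\beta$ is the image of $\mathbb{B}$ under projection to the first $n-1$ coordinates, hence a subdirect product of $\mathbb{A}_1,\dots,\mathbb{A}_{n-1}$, to which I apply the induction hypothesis. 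In the second case $\beta\le\alpha$ forces $\beta=\beta\cap\alpha=0_\mathbb{B}$, so $\pi_n$ is redundant and the projection to the first $n-1$ coordinates is already an injective subdirect embedding, again dispatched by induction. Either way $\mathbb{B}$ is isomorphic to a product of a subfamily of the $\mathbb{A}_i$, and the base case $n=1$ is just $\mathbb{B}=\mathbb{A}_1$.

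For the final assertion I would specialize to $n=2$ and read off the dichotomy. With $\pi_1\cap\pi_2=0_\mathbb{B}$ and both $\pi_i$ coatoms, either $\pi_1\vee\pi_2=1_\mathbb{B}$ or $\pi_1\vee\pi_2\in\{\pi_1,\pi_2\}$. In the first case the lemma's isomorphism $\mathbb{C}\cong\mathbb{C}/\pi_1\times\mathbb{C}/\pi_2\cong\mathbb{A}\times\mathbb{B}$ is given by $c\mapsto(\operatorname{proj}_1(c),\operatorname{proj}_2(c))$, which is nothing but the subdirect inclusion $\mathbb{C}\hookrightarrow\mathbb{A}\times\mathbb{B}$; since it is onto, $\mathbb{C}$ is the full direct product. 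In the second case, say $\pi_2\le\pi_1$, we get $\pi_2=\pi_1\cap\pi_2=0_\mathbb{B}$, so $\operatorname{proj}_2$ is an isomorphism $\mathbb{C}\cong\mathbb{B}$; as $\mathbb{C}$ is then simple, $\pi_1\in\{0_\mathbb{B},1_\mathbb{B}\}$, and nontriviality of $\mathbb{A}$ excludes $1_\mathbb{B}$, so $\operatorname{proj}_1$ is an isomorphism too and $\mathbb{C}$ is the graph of $\operatorname{proj}_1\circ\operatorname{proj}_2^{-1}$.
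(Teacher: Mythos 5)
Your proof is correct, and it is essentially the standard argument that the paper itself does not reproduce but delegates to \cite{Burris1981}: the projection kernels $\pi_i$ are coatoms of the congruence lattice meeting to $0_\mathbb{B}$, and Maltsev's theorem (congruence permutability) upgrades any complementary pair $\alpha\cap\beta=0_\mathbb{B}$, $\alpha\vee\beta=1_\mathbb{B}$ to a direct decomposition, after which your induction on $n$ and the $n=2$ specialization go through as written. The only caveats are cosmetic: in your last paragraph $1_\mathbb{B}$ should read $1_\mathbb{C}$ and the graph is that of $\operatorname{proj}_2\circ\operatorname{proj}_1^{-1}\colon\mathbb{A}\rightarrow\mathbb{B}$; and you are right to read the ``in particular'' clause as tacitly assuming $\mathbb{A}$ and $\mathbb{B}$ simple, since as literally stated it fails already for $\mathbb{C}=\{(x,y)\,:\,x\equiv y\ (\mathrm{mod}\ 2)\}\leq_{sp}\mathbb{Z}_4\times\mathbb{Z}_4$.
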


The existence of a Maltsev operation implies the following property on any subdirect product, which we will refer to as the \emph{rectangularity property}:

\begin{proposition} Let $C\leq_{sp} \mathbb{A}\times\mathbb{B}$, where $\mathbb{A}$ and $\mathbb{B}$ are algebras in a Maltsev variety. Then, the following holds: if $(a,b), (a,b'), (a',b')\in C$, then $(a',b)\in C$.
\end{proposition}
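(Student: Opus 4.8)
The plan is to use the Maltsev term directly, since rectangularity is essentially an immediate consequence of closure under a single term operation. Because $\mathbb{A}$ and $\mathbb{B}$ lie in a Maltsev variety, that variety has a ternary term $m$ satisfying $m(x,x,y)\approx m(y,x,x)\approx y$, and $m$ is therefore a term operation of the product $\mathbb{A}\times\mathbb{B}$ acting coordinatewise. Since $C\leq_{sp}\mathbb{A}\times\mathbb{B}$ is a subalgebra, it is closed under this term operation. So the entire argument reduces to feeding the three given tuples into $m$ in a suitable order and reading off the result.

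The one substantive step is to choose the correct arrangement of the tuples $(a,b)$, $(a,b')$, $(a',b')$ so that the two Maltsev identities act on the two coordinates simultaneously to yield $(a',b)$. The arrangement that works is
\[
m\bigl((a',b'),(a,b'),(a,b)\bigr)=\bigl(m(a',a,a),\,m(b',b',b)\bigr)=(a',b).
\]
Here the first coordinate uses the identity $m(y,x,x)\approx y$ with $y=a'$ and $x=a$ to produce $a'$, while the second coordinate uses $m(x,x,y)\approx y$ with $x=b'$ and $y=b$ to produce $b$. Each of the three input tuples $(a',b')$, $(a,b')$, $(a,b)$ is assumed to belong to $C$.

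I would then conclude by invoking closure: as $C$ is closed under the coordinatewise term operation $m$ and the three arguments lie in $C$, their image $(a',b)$ lies in $C$ as well, which is exactly the claim. There is no real obstacle in this argument; the only point requiring a moment's care is identifying the order of the tuples so that the two defining identities fire on opposite coordinates at once, and the computation above records that choice.
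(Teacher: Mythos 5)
Your proof is correct and is essentially the paper's own argument: both apply the Maltsev term coordinatewise to the three given tuples and read off $(a',b)$, the only cosmetic difference being that you list the tuples in the reverse order (the paper computes $(m(a,a,a'),m(b,b',b'))$). Nothing further is needed.
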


\begin{proof} Let $m(x,y,z)$ be a Maltsev polymorphism on both $\mathbb{A}$ and $\mathbb{B}$. Then,
$$(a',b)=(m(a,a,a'), m(b,b',b'))\in C.$$
\end{proof}

Subdirect products of a pair of (not necessarily Maltsev) algebras give rise to pairs of congruences which will be used to provide finer structural analysis of subdirect products of pairs of algebras.

\begin{proposition}\label{absorb} Let $\mathbb{R}\leq_{sp} \mathbb{A}\times \mathbb{B}$.
The binary relation $\alpha$ defined on $A$ by
$$(a,a')\in\alpha  \mbox{ if and only if  there exists $b\in B$ such that } (a,b),(a,b')\in C$$
is a congruence of $\mathbb{A}$. The analogous statement is true of the dual relation $\beta$ defined on $B$.

\end{proposition}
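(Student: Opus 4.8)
The plan is to check, in order, that $\alpha$ is reflexive, symmetric, transitive, and compatible with the basic operations of $\mathbb{A}$; an equivalence relation with the compatibility property is exactly a congruence. Reflexivity is immediate from the hypothesis that $C$ is subdirect: the projection of $C$ onto its first coordinate is all of $A$, so for each $a\in A$ there is some $b\in B$ with $(a,b)\in C$, and this single tuple witnesses $(a,a)\in\alpha$. Symmetry is built into the definition, since the condition defining $(a,a')\in\alpha$ is visibly symmetric in $a$ and $a'$. This leaves compatibility and transitivity as the two substantive points.

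For compatibility, let $f$ be an $n$-ary basic operation of $\mathbb{A}$, and suppose $(a_i,a_i')\in\alpha$ for $i=1,\dots,n$, each witnessed by some $b_i\in B$ with $(a_i,b_i),(a_i',b_i)\in C$. Since $C$ is a subalgebra of $\mathbb{A}\times\mathbb{B}$, it is closed under the coordinatewise action of $f$. Setting $b=f^{\mathbb{B}}(b_1,\dots,b_n)$ and applying $f$ first to the tuples $(a_i,b_i)$ and then to the tuples $(a_i',b_i)$ gives
\[
\bigl(f^{\mathbb{A}}(a_1,\dots,a_n),\,b\bigr)\in C
\qquad\text{and}\qquad
\bigl(f^{\mathbb{A}}(a_1',\dots,a_n'),\,b\bigr)\in C .
\]
The common second coordinate $b$ is precisely the witness needed for $\bigl(f^{\mathbb{A}}(a_1,\dots,a_n),f^{\mathbb{A}}(a_1',\dots,a_n')\bigr)\in\alpha$. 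Note that this argument uses only that $C$ is a subalgebra, not the Maltsev hypothesis.

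Transitivity is the genuinely nontrivial step, and I expect it to be the sole place where the Maltsev structure — in the guise of the rectangularity property proved above — is needed. Conceptually, $\alpha$ is the image, under the first projection, of the kernel of the second projection on $C$; the image of a congruence under a surjective homomorphism need not be transitive, so something must be supplied, and rectangularity is exactly what is available. Concretely, suppose $(a,a')\in\alpha$ and $(a',a'')\in\alpha$, witnessed by $b$ and $c$ respectively, so that $(a,b),(a',b)\in C$ and $(a',c),(a'',c)\in C$. Applying rectangularity to $(a',b),(a',c),(a'',c)\in C$ (the first two share the first coordinate $a'$, the last two share the second coordinate $c$) produces $(a'',b)\in C$. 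Together with $(a,b)\in C$, the element $b$ is now a common second coordinate witnessing $(a,a'')\in\alpha$. The dual statement for $\beta$ on $B$ follows by interchanging the two factors throughout. The main obstacle, then, is isolating transitivity and recognizing that rectangularity resolves it; the other three properties are routine.
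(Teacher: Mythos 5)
Your proof is correct, and in fact the paper offers no proof of Proposition~\ref{absorb} at all, so your argument fills a gap rather than duplicating one. You have also silently (and correctly) repaired two typos in the statement: the subdirect product is named $\mathbb{R}$ but then referred to as $C$, and the defining condition ``$(a,b),(a,b')\in C$'' only makes sense as ``$(a,b),(a',b)\in C$'', i.e.\ $a$ and $a'$ share a common neighbour $b$ in $B$; your reading is the intended one. Your verification of reflexivity (from subdirectness), symmetry (from the form of the definition), and compatibility (from $C$ being a subalgebra, applying $f$ coordinatewise to the witnessing pairs) is routine and accurate, and your use of the rectangularity property to obtain $(a'',b)\in C$ from $(a',b),(a',c),(a'',c)\in C$, and hence transitivity, is exactly right.

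One remark worth making explicit: your observation that transitivity is the \emph{only} place where the Maltsev hypothesis enters is not just an aesthetic point --- it exposes a slight inconsistency in the paper's own framing. The sentence introducing the proposition claims that linkedness congruences arise for subdirect products of ``not necessarily Maltsev'' algebras, but for general algebras the relation as defined (existence of a single common neighbour) need not be transitive, and one must instead take its transitive closure to obtain a congruence. So the proposition as literally stated is true only under the Maltsev (or at least rectangularity) assumption, which is the setting your proof supplies and the only setting in which the paper ever uses it.
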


We will refer to the congruences $\alpha$ and $\beta$, defined as in Proposition \ref{absorb} , as the \emph{linkedness} congruences on $\mathbb{A}$ and $\mathbb{B}$ induced by $\mathbb{C}$. We say that $\mathbb{A}$ and $\mathbb{B}$ are \emph{linked} if $\alpha=1_\mathbb{A}$ and $\beta=1_\mathbb{B}$ or, equivalently, if $\pi_1\vee\pi_2=1_\mathbb{C}$. If $\alpha=0_\mathbb{A}$ and $\beta=0_\mathbb{B}$, the subdirect product is the graph of an isomorphism between the algebras $\mathbb{A}$ and $\mathbb{B}$.

Based on the definition of linkedness congruences (Proposition \ref{absorb}), the following statement has an elementary proof, using the rectangularity property:

\begin{corollary} \label{full} Let $C\leq_{sp} \mathbb{A}\times \mathbb{B}$, where $\mathbb{A}$ and $\mathbb{B}$ are algebras in a Maltsev variety, and let $\alpha$ and $\beta$ be the linkedness congruences associated with this subdirect product, on $\mathbb{A}$ and $\mathbb{B}$, respectively, as defined in Proposition \ref{absorb}. Then, if $C_1$ and $C_2$ are blocks of $\alpha$ and $\beta$, respectively, which are connected, 
$$C_1\times C_2\leq \mathbb{A}\times\mathbb{B}.$$
\end{corollary}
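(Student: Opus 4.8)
The plan is to reduce the statement to two applications of the rectangularity property. First I would fix notation, writing $C_1=a_0/\alpha$ and $C_2=b_0/\beta$, and unpacking the hypothesis that the two blocks are \emph{connected} as the existence of a single witnessing edge $(a^*,b^*)\in C$ with $a^*\in C_1$ and $b^*\in C_2$. The goal then separates into the substantive claim that $C_1\times C_2\subseteq C$ and the routine verification that $C_1\times C_2$ is in fact a subalgebra of $\mathbb{A}\times\mathbb{B}$.

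For the containment I would fix arbitrary $a\in C_1$ and $b\in C_2$ and produce $(a,b)\in C$. Since $a$ and $a^*$ lie in the same $\alpha$-block, the definition of the linkedness congruence (Proposition \ref{absorb}) yields a common coordinate $b_1\in B$ with $(a,b_1),(a^*,b_1)\in C$; dually, since $b$ and $b^*$ lie in the same $\beta$-block, there is $a_1\in A$ with $(a_1,b),(a_1,b^*)\in C$. This puts the five tuples $(a,b_1),(a^*,b_1),(a^*,b^*),(a_1,b^*),(a_1,b)$ into $C$. Applying rectangularity to $(a^*,b^*),(a^*,b_1),(a,b_1)$ fills in the corner $(a,b^*)\in C$, and applying it once more to $(a_1,b),(a_1,b^*),(a,b^*)$ fills in $(a,b)\in C$. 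Equivalently, one may carry out the two steps directly with the Maltsev term as $(a,b^*)=m\bigl((a,b_1),(a^*,b_1),(a^*,b^*)\bigr)$ followed by $(a,b)=m\bigl((a,b^*),(a_1,b^*),(a_1,b)\bigr)$. Since $a$ and $b$ were arbitrary, $C_1\times C_2\subseteq C$.

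To finish I would observe that, because the algebras in question are idempotent, every congruence block is a subuniverse: for any basic operation $f$ and any $c_1,\dots,c_n$ in a single $\alpha$-block one has $f(c_1,\dots,c_n)\mathrel{\alpha} f(c_1,\dots,c_1)=c_1$, so $f(c_1,\dots,c_n)$ remains in that block. Hence $C_1\leq\mathbb{A}$ and $C_2\leq\mathbb{B}$, so $C_1\times C_2\leq\mathbb{A}\times\mathbb{B}$; together with the containment just established this gives $C_1\times C_2\leq C$.

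The only point requiring care is the bookkeeping: selecting the correct witnesses $b_1$ and $a_1$ from the two linkedness congruences and then matching the resulting tuples to the precise shape demanded by rectangularity, so that the two corner-completions compose to yield exactly $(a,b)$. Once the witnesses are chosen this is immediate, so I do not anticipate a genuine obstacle beyond organizing these five tuples correctly.
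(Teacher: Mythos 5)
Your proof is correct and follows exactly the route the paper intends: the paper gives no explicit argument, stating only that the corollary ``has an elementary proof, using the rectangularity property,'' and your two corner-completions via rectangularity (equivalently, two applications of the Maltsev term), combined with the observation that congruence blocks of idempotent algebras are subuniverses, are precisely that elementary proof. The bookkeeping with the witnesses $b_1$ and $a_1$ checks out, so nothing further is needed.
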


\section{Reduction to binary relations}\label{Binary}

In this section, we outline the reduction of an arbitrary instance with a sufficient degree of consistency to a binary one. The construction is due to L. Barto and M. Kozik and we largely adhere to their exposition in \cite{b-k1}.

An instance is said to be \emph{syntactically simple} if it satisfies the following conditions:

\begin{itemize}
\item every constraint is binary and it its scope is a pair of distinct variables $(x,y)$.
\item for every pair of distinct variables $x,y$, there is at most one costraint $R_{x,y}$ with the scope $(x,y)$.
\item if $(x,y)$ is the scope of $R_{x,y}$, then $(y,x)$ is the scope of the constraint $R_{y,x}=\{(b,a) \, \vert\, (a,b)\in R_{x,y}\}$ (\emph{symmetry of constraints}).
\end{itemize}

Given the Maltsev algebra $\mathbb{A}$ parametrizing the instance $\mathcal{I}$, such that the maximal arity of a relation in $\mathcal{I}$ is $p$, we run the algorithm verifying the $(2\lceil\frac{p}{2}\rceil,3\lceil\frac{p}{2}\rceil)$-consistency on $\mathcal{I}$. If the algorithm terminates in failure, we output ``$\mathcal{I}$ has no solution." If the algorithm terminates successfully, we output a new, syntactically simple instance $\mathcal{I}'$ in the following way:

\begin{itemize} 
\item The instance is parametrized by $\mathbb{A}^{\lceil \frac{p}{2}\rceil}$, which is a Maltsev algebra. 
\item For every $\lceil\frac{p}{2}\rceil$-tuple of variables in $\mathcal{I}$, we introduce a new variable in $\mathcal{I}'$ and, if $x=(x_1,\ldots,x_{\lceil\frac{p}{2}\rceil})$ and $y=(y_1,\ldots,y_{\lceil\frac{p}{2}\rceil})$ with $x\neq y$, we introduce a constraint 
\begin{multline*}
R_{x,y}=\{((a_1,\ldots,a_{\lceil\frac{p}{2}\rceil}),(b_1,\ldots,b_{\lceil\frac{p}{2}\rceil}))\,\vert \\ (a_1,\ldots,a_{\lceil\frac{p}{2}\rceil},b_1,\ldots,b_{\lceil\frac{p}{2}\rceil}) \mbox{ admit a consistent $p$-assignment of values }\}.
\end{multline*}
\end{itemize}

The binary instance $I'$ constructed in this way will have a solution if, and only if, the instance $I$ has a solution.

\begin{definition} A \emph{step}  in an instance $\mathcal{I}$ is a pair of variables which is the scope of a constraint in $\mathcal{I}$. A \emph{path-pattern} from $x$ to $y$ in $\mathcal{I}$ is a sequence of steps such that every two steps correspond to distinct binary constraints and which identifies each step's end variable with the next step's start variable. A \emph{subpattern} of a path-pattern is a path-pattern defined by a substring of the sequence of steps. We say that a path-pattern is a \emph{cycle} based  at $x$ if both its start and end variable are $x$.
\end{definition}

\begin{definition} Let
$$p=(x_1,x_2,\ldots,x_k)$$
be a path-pattern. A \emph{realization} of $p$ is a $k$-tuple $(a_1,\ldots,a_k)\in \mathbb{S}_{x_1}\times\ldots\times \mathbb{S}_{x_k}$ such that $(a_i,a_j)$ satisfies the binary constraint associated with the $(x_i,x_j)$-step. If $p$ is a path-pattern with the start variable $x_i$ and $A\subseteq S_{x_i}$, we denote $A+p$ the set of the end elements of all realizations of $p$ whose first element is in $A$. $-p$ will denote the inverse pattern of $p$, i.e. the pattern obtained by reversing the traversal of the pattern $p$. In that case, we define $A-p = A+(-p)$.
\end{definition}

\section{Cyclic Maltsev Constraint Satisfaction Problems}

We will say that a Maltsev constraint satisfaction problem is \emph{cyclic} if its domains are isomorphic simple algebras, all of the constraints are binary, and it is is 1-consistent. As we have seen, in the context of Maltsev algebras, this implies that each constraint relation between two domains $\mathbb{S}_{x_i}$ and $\mathbb{S}_{x_j}$ is either the graph of an isomorphism or a full direct product $\mathbb{S}_{x_i}\times \mathbb{S}_{x_j}$. 

Based on our earlier analysis of simple idempotent Maltsev algebras, there are two types of cyclic problems:

\begin{enumerate}
\item A system of linear equations in two variables over a finite field; 
\item A binary CSP over a simple, functionally complete Maltsev algebra.
\end{enumerate}

For each cyclic CSP $\mathcal{I}$ over a family of simple Maltsev algebras, all isomorphic to some fixed simple Maltsev algebra  $\mathbb{A}$, we can define the accompanying undirected \emph{instance graph} $G(\mathcal{I})$ in the following way: the vertices of the graph are all domains $\mathbb{S}_x$ of $\mathcal{I}$ and two vertices $\mathbb{S}_x$ and $\mathbb{S}_y$ have an edge between them if, and only if, the binary constraint relation $R_{x,y}$ is the graph of an isomorphism. We can compute the connected components of this graph in logspace, using Reingold's algorithm (\cite{Reingold05}). 

It is not difficult to see that, in order to solve such a CSP, we need to be able to solve it independently in each connected component of $G(\mathcal{I})$. Fix a connected component $C$ of  $G(\mathcal{I})$, and assume that the domain indices appearing in the definition of $G(\mathcal{I})$ are $C=\{x_{i_1},\ldots,x_{i_r}\}$. We restrict the instance $\mathcal{I}$ to this set of indices to obtain $\mathcal{I}_C$ and remove the constraints between any $\mathbb{S}_{x_{i_j}}$ and $\mathbb{S}_{x_{i_k}}$ such that there is no edge between $x_{i_j}$ and $x_{i_k}$ in $G(\mathcal{I})$. 

We obtain the following proposition, whose proof follows directly from the definitions of $G(\mathcal{I})$ and $\mathcal{I}_C$:

\begin{proposition} Suppose $a\neq b$ with $a,b\in \mathbb{S}_{x{i_j}}$ and $x_{i_j}\in C$. Then, no solution to $\mathcal{I}_C$ (or $\mathcal{I}$) $f$ can satisfy $f(x_{i_j})=a$ and $f(x_{i_k})=c$, for $c\in\mathbb{S}_{x_{i_k}}$, $x_{i_k}\in C$, if there is a path from $c$ to $b$ in the undirected graph $G'$, whose vertices are elements of $\bigcup_{i=1,\ldots, r}\mathbb{S}_{x_i}$ and whose edge relation is defined by the binary constraints in $\mathcal{I}_C$.
\end{proposition}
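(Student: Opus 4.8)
The plan is to exploit the \emph{functional} character of the constraints that survive in $\mathcal{I}_C$. By construction, $\mathcal{I}_C$ retains precisely those binary constraints corresponding to edges of $G(\mathcal{I})$, and by the definition of the instance graph together with Theorem \ref{maltsev}, each such constraint $R_{x,y}$ is the graph of an isomorphism $\phi_{x,y}\colon \mathbb{S}_x\to\mathbb{S}_y$. My first step is to record the resulting observation: every edge of $G'$ is functional in \emph{both} directions, so that if $\{u,v\}$ is an edge arising from the step $(x,y)$ with $u\in\mathbb{S}_x$ and $v\in\mathbb{S}_y$, then $v=\phi_{x,y}(u)$ and $u=\phi_{x,y}^{-1}(v)$. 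This is what converts the combinatorial hypothesis about a path in $G'$ into a rigid propagation statement about solutions. I would also note at the outset that it suffices to treat $\mathcal{I}_C$, since any solution of $\mathcal{I}$ restricts to a solution of $\mathcal{I}_C$ with the same values at $x_{i_j}$ and $x_{i_k}$.

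Next I would argue by contradiction. Suppose $f$ is a solution to $\mathcal{I}_C$ with $f(x_{i_j})=a$ and $f(x_{i_k})=c$, and let
$$c=v_0,v_1,\ldots,v_\ell=b$$
be a path in $G'$ witnessing the hypothesis, where each $v_s$ lies in the domain $\mathbb{S}_{y_s}$ of a variable $y_s\in C$ and each consecutive pair $\{v_s,v_{s+1}\}$ is an edge coming from the step $(y_s,y_{s+1})$. Since $v_0=c\in\mathbb{S}_{x_{i_k}}$ we have $y_0=x_{i_k}$ and $f(y_0)=c=v_0$, which is the base of the induction to come.

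The core of the argument is then a straightforward induction on $s$ establishing $f(y_s)=v_s$. Assuming $f(y_s)=v_s$, the fact that $f$ satisfies $R_{y_s,y_{s+1}}$ gives $(f(y_s),f(y_{s+1}))\in R_{y_s,y_{s+1}}$; since this relation is the graph of $\phi_{y_s,y_{s+1}}$ and $\{v_s,v_{s+1}\}$ is an edge, we obtain $f(y_{s+1})=\phi_{y_s,y_{s+1}}(f(y_s))=\phi_{y_s,y_{s+1}}(v_s)=v_{s+1}$. If the step is traversed in the opposite orientation, the identical conclusion follows from functionality of $\phi_{y_s,y_{s+1}}^{-1}$. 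Taking $s=\ell$ yields $f(x_{i_j})=f(y_\ell)=v_\ell=b$, contradicting $f(x_{i_j})=a$ because $a\neq b$.

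I do not anticipate a genuine obstacle; the only points requiring care are bookkeeping. First, each vertex of $G'$ must be tagged with the unique variable whose domain contains it, so that the endpoint $b\in\mathbb{S}_{x_{i_j}}$ really forces $y_\ell=x_{i_j}$; treating $\bigcup_{i}\mathbb{S}_{x_i}$ as a \emph{disjoint} union handles this cleanly. Second, the path may revisit a domain, but since $f$ is a well-defined function and every edge is functional, the propagated values are forced at each step regardless of repetitions, so no inconsistency can surface before the final contradiction. This exhausts the content of the statement.
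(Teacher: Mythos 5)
Your argument is correct and is essentially the paper's own proof: both exploit the fact that every constraint retained in $\mathcal{I}_C$ is the graph of an isomorphism, so values propagate uniquely along the given path in $G'$, forcing $f(x_{i_j})=b$ and contradicting $f(x_{i_j})=a$. The paper phrases this as the impossibility of two realizations of the same path pattern with distinct starting points sharing an endpoint, which is just your induction read in the opposite direction.
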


\begin{proof} For the sake of contradiction, we assume that such a solution $f$ exists. Let $p$ be the undirected path in $C$ connecting $x_{i_j}$ to $x_{i_k}$, such that one realization of that path pattern connects $b$ to $c$. Consider the path $p': a=f(x{i_j}),\ldots, f(x_{i_k})=c$, obtained by projecting the solution $f$ to all the domains whose indices appear in $p$. However, the definitions of $G(\mathcal{I})$ and $G'$ imply that two paths originating in the same $\mathbb{S}_{x_{i_j}}$ with different starting points cannot have the same endpoint in any $\mathbb{S}_{x_{i_k}}$, assuming $x_{i_j},x_{i_k}\in C$. Contradiction. Therefore, such a solution cannot exist.\end{proof}

As a consequence of this proposition, we see that the following must hold

\begin{corollary} The binary instance $\mathcal{I}_C$ has a solution if, and only if, there does not exist $c\in\mathbb{S}_{x{i_k}}$, $x_{i_k}\in C$, which is reachable from both $a$ and $b$, for some $a,b,a\neq b$ and $a,b\in\mathbb{S}_{x{i_j}}$ in the undirected graph $G'$ whose vertices are elements of $\bigcup_{i=1,\ldots, r}\mathbb{S}_{x_i}$ and whose edge relation is defined by the binary constraints in $\mathcal{I}_C$.
\end{corollary}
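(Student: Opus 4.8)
The plan is to show the two directions of the biconditional, using the preceding Proposition as the main tool for the forward (``only if'') direction and an explicit construction of a solution for the reverse (``if'') direction. First I would observe that, by the preceding Proposition, if there exist distinct $a,b\in\mathbb{S}_{x_{i_j}}$ and a common vertex $c\in\mathbb{S}_{x_{i_k}}$ reachable from both $a$ and $b$ in $G'$, then no solution can exist: any solution $f$ would have to assign to $x_{i_j}$ exactly one of the two values, say $a$, and then the realization connecting $a$ to $c$ forces $f(x_{i_k})=c$, while the realization connecting $b$ to $c$ also forces $f(x_{i_k})=c$ from a distinct starting point, contradicting the fact (established in the Proposition's proof) that two path-patterns originating at the same domain from distinct starting elements cannot share an endpoint. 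This gives the contrapositive of the ``if'' direction, i.e.\ solvability implies no such ambiguous $c$.

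For the converse, I would assume that no such ambiguous $c$ exists and construct a solution explicitly. Since all constraints in $\mathcal{I}_C$ are graphs of isomorphisms between isomorphic simple algebras, each edge in $G'$ determines a bijective correspondence between the elements of adjacent domains. The key point is that the ``no ambiguous $c$'' hypothesis guarantees that, within each connected component of $G'$, the reachability relation restricted to any single domain $\mathbb{S}_{x_{i_j}}$ is trivial: if $a$ and $b$ in the same domain were both connected to some common $c$, that would be precisely a forbidden ambiguous endpoint. Hence each connected component of $G'$ meets each domain $\mathbb{S}_{x_{i_j}}$ in at most one element, so selecting one vertex per component (or, equivalently, propagating a single chosen value along the isomorphisms) yields a well-defined assignment $f$.

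I would then verify that this assignment is genuinely a solution: for each binary constraint $R_{x_{i_j},x_{i_k}}$ that is the graph of an isomorphism, the values $f(x_{i_j})$ and $f(x_{i_k})$ lie in the same connected component of $G'$ and are therefore related by the corresponding isomorphism, so $(f(x_{i_j}),f(x_{i_k}))\in R_{x_{i_j},x_{i_k}}$. Since $\mathcal{I}_C$ retains only those constraints corresponding to edges of $G(\mathcal{I})$ (the isomorphism constraints), every constraint of $\mathcal{I}_C$ is satisfied, and $f$ is a solution.

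The main obstacle I anticipate is the well-definedness of $f$ in the converse direction, namely arguing carefully that the ``no ambiguous $c$'' hypothesis not only prevents two distinct elements of one domain from meeting at a common endpoint, but in fact forces each connected component of $G'$ to intersect each domain in at most a single element. This requires tracking reachability along chains of isomorphisms and ruling out any cycle in $G'$ that could close up at two distinct elements of the same domain; the rectangularity property and the fact from Theorem~\ref{maltsev} that each edge is an \emph{isomorphism} (rather than a general subdirect product) are what make this propagation consistent, and I would lean on the contrapositive of the Proposition to convert any such inconsistency directly into a forbidden common endpoint.
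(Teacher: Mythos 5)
Your ``if'' direction is essentially sound and is actually more detailed than the paper, which states this corollary with no proof beyond the phrase ``as a consequence of this proposition''; your worry about well-definedness is resolved exactly as you suspect (if a connected component of $G'$ met some domain in two elements $a\neq b$, then $b$ itself would be a vertex reachable from both $a$ and $b$, which the hypothesis forbids), and totality of the resulting assignment follows from $1$-consistency together with the connectivity of $C$ in $G(\mathcal{I})$, which you should say explicitly.

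The genuine gap is in your ``only if'' direction, at the step ``any solution $f$ would have to assign to $x_{i_j}$ exactly one of the two values, say $a$.'' Nothing forces $f(x_{i_j})\in\{a,b\}$: the ambiguous pair may lie in a connected component of $G'$ that the solution never meets, and then neither the Proposition nor the determinism of realizations produces a contradiction. Concretely, take three domains $X=Y=Z=(\mathbb{Z}_3;\,x-y+z)$, with $R_{X,Y}$ and $R_{Y,Z}$ the graphs of the identity and $R_{X,Z}$ the graph of the automorphism $x\mapsto 2x$. All constraints are graphs of isomorphisms, the instance is $1$-consistent, and $f\equiv 0$ is a solution; yet $c=1\in Z$ is reachable in $G'$ from $a=1\in X$ (via $X\to Y\to Z$) and from $b=2\in X$ (via $X\to Z$, since $2\cdot 2=1$). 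So the ``only if'' implication is false as literally stated, and no argument can close this gap. What your converse construction actually proves is the corrected equivalence: $\mathcal{I}_C$ has a solution if and only if \emph{some} connected component of $G'$ meets each domain in exactly one element, equivalently, if and only if there exists some $a$ such that no $c$ is reachable both from $a$ and from some $b\neq a$ in the same domain. You should either prove that statement or flag the discrepancy with the paper's formulation rather than attempt to derive the stated biconditional from the Proposition.
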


Therefore, a cyclic CSP over a simple Maltsev domain can be solved in deterministic logspace, using Reingold's algorithm for reachability in undirected graphs.

\section{An algorithm for Maltsev CSPs }

In this section, we will construct a polynomial time algorithm for solving constraint satisfaction problems over a finite Maltsev template.

The underpinning of the algorithm will be the new concept of Maltsev consistency which pre-processes the instance by removing cyclic CSP subinstances which do not contain any solutions. The consistency check will produce a polynomial family of subinstances in which solutions must lie. We can view such subinstances as ``passive"; the reductions performed on the current active instance are implicitly performed on the passive subinstances in such a way that Maltsev consistency is preserved. A reader familiar with Bulatov-Dalmau algorithm will notice a smiliarity between the passive subinstances and the notion of compact representation (i.e. signature) in that algorithm. This similarity is not coincidental; the tuples in the compact representation of a relation will belong to passive subinstances.

\subsection{Maltsev consistency}

We define inductively, the Maltsev-consistency checking agorithm $\mathcal{M}_k$, for all CSP instances $\mathcal{I}$ such that $\max_{i}|\mathbb{S}_{x_i}|\leq k$.

Let $A\leq \mathbb{S}_{x_i}$. For any $x_j\in V$, $x_j\neq x_i$, we define $R^+_{x_i,x_j}(A)=\{b\in \mathbb{S}_{x_j} \, : \, \exists a\in A, \, (a,b)\in R_{x_i,x_j}\}$.  Clearly, $R^+_{x_i,x_j}(A)$ is a subuniverse of $\mathbb{S}_{x_j}$.

We will define Maltsev consistency in such a way, that all subinstances which are CSPs over isomorphic simple subuniverses and which do not have a solution, are excluded by the consistency check. In addition, such subinstances which cannot be extended to a larger (2,3)-consistent subinstance are excluded.

We start by defining the list $\mathcal{A}$ of all pairs $(A,\theta_A)$, where $A$ is a subuniverse of some $\mathbb{S}_{x_i}$ and $\theta_A$ is a maximal congruence of $A$.We can arrange the list $\mathcal{A}$ in such a way that, if  $(A,\theta_A)$ and $(A',\theta_{A'})$ are two elements of the list and $A'$ is contained in a $\theta_A$-block of $A$, then $(A,\theta_A)$ appears in the list $\mathcal{A}$ before $(A',\theta_{A'})$.

We are now ready to state the procedure which enforces Maltsev consistency

\begin{enumerate}
\item For the next pair $(A,\theta_A)$ in the list $\mathcal{A}$, form the $(A,\theta_A)$-\emph{test instance} in the following way: suppose $A\leq \mathbb{S}_{x_i}$, for some $1\leq i\leq n$. For $x_j\neq x_i$, if there exists a congruence $\alpha_{x_j}$ on $R_{x_i,x_j}(A)$, such that, if $B_1$ and $B_2$ are two distinct $\theta_A$-blocks and $p$ a path pattern from $x_i$ to $x_j$ such that $R^+_{x_i,x_j}(A)\cap (B_1+p)$ and $R^+_{x_i,x_j}(A)\cap (B_2+p)$ are containt in distinct blocks of $\alpha_{x_j}$, we will say that the variable $x_j$ is  \emph{relevant}. Therefore, for each relevant variable $x_j$,
$$\mathbb{S}/\alpha_{x_j}\cong A/\theta_A.$$
In fact, $\alpha_{x_j}$ is independent of the choice of the path pattern $p$, because of (2,3)-consistency. 

We define a \emph{strand} to be the set of those congruence blocks in each relevant domain which are linked to the same congruence block of $\theta_A$. The $(A,\theta_A)$-test instance will have as its domains the algebras $R^+_{x_i,x_j}(A)/\alpha_{x_j}$, for $j\neq i$, for all relevant variables $x_j$ and $A$ in the $j$-th coordinate, along with $\alpha_{x_i}=\theta_A$.  Since $\mathcal{I}$ is a (2,3)-consistent instance, for any pair of relevant variables $y,z$, distinct from $x$, the binary constraint $E_{y,z}$ induces a subdirect product on $R_{x,y}^+(A)$ and $R_{x,z}^+(A)$, so that the $(A,\theta_A)$-test instance is 1-consistent.

\item The $(A,\theta_A)$-test instance is a  cyclic CSP, and using Reingold's algorithm, we test whether blocks of $\theta_A$ appear in solutions or not; those which do not are removed.Furthermore, we check each solution strand for Maltsev consistency, using $\mathcal{M}_{k-1}$ and, if it fails, we remove the corresponding $\theta_A$-congruence block from the instance.

\item Enforce (2,3)-consistency.

\item Proceed to the next element in the list $(A',\theta_A')$, if there are any left.

\end{enumerate}

There are only polynomially many pairs in the list $\mathcal{M}$, so the algorithm for enforcing Maltsev consistency runs in polynomial time. In fact the number of test instances can be bounded above by $\mathcal{O}(n)$.

\begin{lemma} Let $\mathcal{I}$ be a syntactically simple binary instance and let $\mathcal{I'}$ be the instance produced by applying the Maltsev consistency algorithm to it. Then, the sets of solutions to $\mathcal{I}$ and $\mathcal{I'}$ coincide.
\end{lemma}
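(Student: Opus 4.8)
The plan is to prove the two inclusions $\mathrm{Sol}(\mathcal{I}') \subseteq \mathrm{Sol}(\mathcal{I})$ and $\mathrm{Sol}(\mathcal{I}) \subseteq \mathrm{Sol}(\mathcal{I}')$ separately. The first is immediate: every action of the algorithm either enforces $(2,3)$-consistency or deletes a $\theta_A$-congruence block from some domain $\mathbb{S}_{x_i}$, and both kinds of operation only shrink the constraint relations; hence any assignment satisfying the reduced instance $\mathcal{I}'$ already satisfied $\mathcal{I}$. The whole content of the lemma is therefore the \emph{soundness} of the consistency procedure, namely that no block ever deleted can carry a value $f(x_i)$ of a genuine solution $f$ of $\mathcal{I}$. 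I would prove this by a double induction: an outer induction on $k=\max_i|\mathbb{S}_{x_i}|$ (matching the recursive call to $\mathcal{M}_{k-1}$), and, within a fixed $k$, an inner induction on the sequence of pairs $(A,\theta_A)$ drawn from the list $\mathcal{A}$ together with the individual deletions performed while processing each pair. The base case of the outer induction is the situation in which every domain is so small that $\mathcal{M}_{k-1}$ performs no deletion, which is vacuously sound.

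For the inner induction I fix a solution $f$ of $\mathcal{I}$ and a pair $(A,\theta_A)$ currently being processed, and show that every deletion triggered by this pair spares the block containing the relevant coordinate of $f$. There are three sources of deletion. First, the $(2,3)$-consistency enforcement in step (3) is sound by the classical argument that a tuple removed by local propagation cannot be extended to any solution; this preserves $f$. Second, step (2) removes those $\theta_A$-blocks that do not appear in any solution of the $(A,\theta_A)$-test instance. Here I would use Section~\ref{Binary} and the analysis of cyclic problems: because $\mathcal{I}$ is $(2,3)$-consistent, the linkedness congruences $\alpha_{x_j}$ are well defined and, by Theorem~\ref{maltsev} and Corollary~\ref{full}, the test instance is a cyclic CSP over isomorphic simple quotients. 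Restricting $f$ to the relevant variables and passing to the quotients $R^+_{x_i,x_j}(A)/\alpha_{x_j}$ yields an assignment that satisfies every binary constraint of the test instance, so the block $f(x_i)/\theta_A$ does appear in a solution of the test instance and is not deleted by the reachability test.

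Third, and most delicately, step (2) also removes a $\theta_A$-block whose associated \emph{strand} fails the recursive check $\mathcal{M}_{k-1}$. I would argue that the restriction of $f$ to a single strand is a solution of the corresponding strand subinstance: the strand consists of one congruence block in each relevant domain, so its domains are proper subsets of the original ones and have size at most $k-1$, which both makes the recursive call legitimate and brings it under the outer induction hypothesis. By that hypothesis $\mathcal{M}_{k-1}$ is sound, so it cannot destroy the solution obtained by restricting $f$; consequently the strand through $f(x_i)/\theta_A$ survives and its block is not removed. The ordering imposed on $\mathcal{A}$ — that $(A,\theta_A)$ precede $(A',\theta_{A'})$ whenever $A'$ lies inside a $\theta_A$-block — guarantees that these nested refinements are processed in an order consistent with the induction, so that a block retained at a coarser stage is still present when its sub-blocks are examined.

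The step I expect to be the main obstacle is the lifting argument implicit in the second and third sources of deletion: verifying that the restriction of a global solution $f$ of $\mathcal{I}$ is genuinely a solution of the local test instance and of each strand subinstance. This is where the hypotheses are actually used — $(2,3)$-consistency to guarantee that the $\alpha_{x_j}$ are independent of the chosen path pattern and that every pair of relevant domains forms a subdirect product, and the rectangularity property together with Corollary~\ref{full} to ensure that the connected blocks behave as full products so that the quotiented assignment remains consistent along every step of every path pattern. Making this lifting precise, and checking that passing to the quotients by $\alpha_{x_j}$ does not merge the coordinate of $f$ with a deleted block, is the technical heart of the proof; once it is in place, both directions of the solution-set equality follow from the inductions described above.
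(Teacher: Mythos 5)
Your proposal is correct and follows essentially the same route as the paper's own (much terser) proof: the reverse inclusion is immediate because the algorithm only shrinks domains and constraints, and soundness rests on the observation that restricting a global solution $f$ to the relevant variables (respectively, to a strand) yields a solution of the $(A,\theta_A)$-test instance (respectively, of the strand subinstance), so neither the reachability test nor the recursive call $\mathcal{M}_{k-1}$ can delete the block containing the corresponding coordinate of $f$. Your explicit double-induction scaffolding and the three-way case split on the sources of deletion are refinements the paper leaves implicit, but they do not change the underlying argument.
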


\begin{proof} If there exists a solution $f$ to $\mathcal{I}$ whose projection to the $x$-coordinate is in $A\leq \mathbb{S}_x$, then, its restriction to relevant variables is also a solution of the $(A,\alpha_A)$-test instance, viewed as a subinstance of $\mathcal{I}$. If Maltsev consistency test fails on an $\alpha_A$-block, then there cannot be any solutions $f$ projecting into that block in their $x$-coordinate.

Also, the solution projecting into a $\alpha_A$-block $B$ in its x-coordinate will lie in its entirety in the subinstance induced by $B$, so this subinstance must be (2,3)-consistent.
\end{proof}

The following lemma, which is an immediate consequence of Corollary \ref{full}, will play a crucial role in the construction of the algorithm for solving CSPs over Maltsev templates:

\begin{lemma} \label{rect} Let $A\leq \mathbb{S}_{x_i}$, for some $1\leq i\leq n$. Suppose $x_j$ ($j>i$) is a non-relevant variable for the $(A,\theta_A)$-test instance. Then the induced subdirect product 
$$C\leq_{sp} A/\theta_A\times R^+_{i,j}(A)$$
is a full direct product.
\end{lemma}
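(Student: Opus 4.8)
The plan is to analyze the induced subdirect product $C\leq_{sp} A/\theta_A\times R^+_{i,j}(A)$ through its pair of linkedness congruences and to argue that non-relevance of $x_j$ forces the linkedness congruence on the simple side to be the full relation. First I would record the structural setup: since $\theta_A$ is a \emph{maximal} congruence of $A$, the quotient $A/\theta_A$ is simple; and $C$ really is subdirect, its second projection being $R^+_{i,j}(A)$ by the definition of $R^+$ and its first projection being $A/\theta_A$ by $1$-consistency. For a $\theta_A$-block $B$ with image $\bar B\in A/\theta_A$, the fibre of $C$ over $\bar B$ is exactly $C[\bar B]=R^+_{x_i,x_j}(B)$, so the fibres of $C$ are precisely the one-step images of the blocks of $\theta_A$.

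Next I would introduce the linkedness congruences $\alpha$ on $A/\theta_A$ and $\beta$ on $R^+_{i,j}(A)$ from Proposition \ref{absorb}. Because $A/\theta_A$ is simple, $\alpha\in\{0_{A/\theta_A},1_{A/\theta_A}\}$, and the heart of the proof is to rule out $\alpha=0_{A/\theta_A}$. If $\alpha=0_{A/\theta_A}$, then distinct blocks $\bar B_1\neq\bar B_2$ have disjoint fibres $C[\bar B_1]\cap C[\bar B_2]=\emptyset$; combined with subdirectness in the second coordinate, every element of $R^+_{i,j}(A)$ then lies in a unique fibre, so the blocks of $\beta$ are exactly the fibres $C[\bar B]=R^+_{x_i,x_j}(B)$. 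Thus $\beta$ is a congruence on $R^+_{i,j}(A)$ placing the one-step images of distinct $\theta_A$-blocks into distinct blocks, and by the path-independence noted in the construction of the test instance (guaranteed by $(2,3)$-consistency) the same holds for an arbitrary path pattern $p$. Taking $\alpha_{x_j}=\beta$ therefore witnesses that $x_j$ is \emph{relevant}, contradicting the hypothesis. Hence $\alpha=1_{A/\theta_A}$.

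Finally, using the equivalence $\alpha=1_{A/\theta_A}\iff \pi_1\vee\pi_2=1_C\iff \beta=1_{R^+_{i,j}(A)}$ recorded after Proposition \ref{absorb}, both linkedness congruences are full; consequently $C$ forms a single connected block whose projections are all of $A/\theta_A$ and all of $R^+_{i,j}(A)$. Corollary \ref{full} then gives $(A/\theta_A)\times R^+_{i,j}(A)\leq C$, so $C$ is the full direct product, as claimed.

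I expect the main obstacle to be the elimination of the case $\alpha=0_{A/\theta_A}$, and in particular the fact that $R^+_{i,j}(A)$ need not be simple: this blocks any direct appeal to the dichotomy of Theorem \ref{maltsev} for $C$, so the whole argument must be routed through the linkedness congruences and the rectangularity-based Corollary \ref{full} rather than through a ``graph of an isomorphism versus full product'' alternative. A secondary point requiring care is reconciling the literal path-pattern formulation of relevance with the single-step computation of the fibres, which is exactly where $(2,3)$-consistency must be invoked.
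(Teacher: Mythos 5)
Your proof is correct and follows exactly the route the paper intends: the paper gives no explicit proof, declaring the lemma ``an immediate consequence of Corollary \ref{full}'', and your argument --- simplicity of $A/\theta_A$ forces the linkedness congruence there to be $0$ or $1$, the $0$ case would produce a congruence on $R^+_{i,j}(A)$ separating the images of distinct $\theta_A$-blocks and hence make $x_j$ relevant, and the $1$ case yields the full product via Corollary \ref{full} --- is precisely the elaboration of that one-line justification. No substantive differences to report.
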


Assuming the instance passes the Maltsev consistency test, we will also obtain a polynomial family of subinstances $\mathcal{P}$, henceforth referred to as \emph{passive subinstances}, which are those subinstances examined in Step 3 of the algorithm, and which are (2,3)-consistent, after the Maltsev consistency check has terminated successfully. The instances in $\mathcal{P}$ are updated, i.e. modified, by the reductions induced by the ones that will be performed on the active instance $\mathcal{I}$.

\subsection{Reduction to smaller subinstances}\label{reduction}

In this subsection, we show that any 1-consistent, Maltsev-consistent instance $\mathcal{I}$ can be reduced to a proper subinstance $\mathcal{I}'$, which satisfies the same consistency properties, in such a way that, at the last step of the reduction, one of the domains has been reduced to a singleton. We assume that the set of variables $V$ has been ordered as $\{x_1,x_2,\ldots,x_n\}$. Our algorithm will proceed in the following manner: we first reduce $\mathbb{S}_{x_1}$ to a singleton while maintaining the required consistency conditions, in order to obtain a smaller subinstance $\mathcal{I}'$. Following this, an analogous sequence of reduction is carried out on $\mathbb{S}_{x_2}$, etc, until all domains have been reduced to single elements, which yields a solution to the original CSP.

We will say that a 1-consistent instance $\mathcal{I}$ with the set of variables $V=\{x_1,x_2,\ldots,x_n\}$ is $i$-\emph{processed} if, the domains $\mathbb{S}_{x_1},\ldots,\mathbb{S}_{x_i}$ have been reduced to single elements.

\begin{proposition} Let $\mathcal{I}$ be a 1-consistent, Maltsev-consistent instance over a finite Maltsev template, which is $i$-processed and in $\mathcal{A}$. Then, there is a passive subinstance $\mathcal{I}'\in\mathcal{A}$ which is 1-consistent, Maltsev-consistent and $(i+1)$-processed.
\end{proposition}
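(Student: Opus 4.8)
The plan is to reduce the single domain $\mathbb{S}_{x_{i+1}}$ to a one-element algebra while keeping all the required consistency conditions, the domains $\mathbb{S}_{x_1},\dots,\mathbb{S}_{x_i}$ being already singletons by hypothesis. Since every basic operation is idempotent, each congruence block of $\mathbb{S}_{x_{i+1}}$ is itself a subuniverse; this lets me replace $\mathbb{S}_{x_{i+1}}$ by smaller and smaller subalgebras, obtained as congruence blocks, without ever leaving the algebraic framework. The reduction will therefore proceed by descending the congruence lattice of $\mathbb{S}_{x_{i+1}}$ one maximal step at a time, so that after finitely many steps — bounded by the height of that lattice — the surviving block is a single element, at which point the instance is $(i+1)$-processed.

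First I would carry out a single descent step. Writing $A$ for the current value of the domain of $x_{i+1}$ (initially $A=\mathbb{S}_{x_{i+1}}$), I pick a maximal congruence $\theta_A$ of $A$, so that $A/\theta_A$ is simple and the pair $(A,\theta_A)$ occurs in the list $\mathcal{A}$. The associated $(A,\theta_A)$-test instance is a cyclic CSP over copies of the simple Maltsev algebra $A/\theta_A$, and by the analysis of Section~3 its solvable blocks are identifiable in logarithmic space via Reingold's algorithm. Because $\mathcal{I}$ is Maltsev-consistent, every $\theta_A$-block still present in $A$ both appears in a solution of this cyclic instance and has a strand that survived the $\mathcal{M}_{k-1}$ check in Step~2 of the consistency procedure. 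I would select one such block $B$ and restrict the domain of $x_{i+1}$ to $B$, simultaneously restricting every relevant variable to the linked blocks forming the corresponding \emph{strand}; this restricted instance is precisely one of the passive subinstances recorded in $\mathcal{P}$.

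Next I would verify that the restriction preserves everything we need. One-consistency and $(2,3)$-consistency are restored by Step~3 of the procedure, and Maltsev consistency of the strand is exactly what the $\mathcal{M}_{k-1}$ call in Step~2 certified, so the restricted instance is again $1$-consistent, $(2,3)$-consistent and Maltsev-consistent, and lies in $\mathcal{P}$. What makes the restriction harmless for the \emph{other} variables is Lemma~\ref{rect}: for every non-relevant variable $x_j$ the induced subdirect product $C\leq_{sp} A/\theta_A\times R^+_{i,j}(A)$ is a full direct product, so fixing the $x_{i+1}$-coordinate to lie in $B$ imposes no new restriction on $x_j$ beyond those coming from the relevant variables. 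For the relevant variables, the rectangularity property together with Corollary~\ref{full} guarantees that connected, linked congruence blocks assemble into full products, which is what lets the locally chosen blocks of the strand be glued consistently. Preservation of solutions is immediate in one direction and, in the other, follows from the earlier lemma asserting that the solution sets of $\mathcal{I}$ and of its Maltsev-consistent reduct coincide: any solution of $\mathcal{I}$ projects into some surviving block $B$ and hence into the corresponding strand. Iterating the descent until $A$ becomes a singleton yields the desired passive subinstance $\mathcal{I}'$.

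The hard part will be the stability of Maltsev consistency under restriction followed by the re-enforcement of $(2,3)$-consistency: shrinking the relevant domains to restore local consistency must not create a new cyclic subinstance that fails the test. The tools I expect to need here are exactly the rectangularity property and Corollary~\ref{full}, which govern how congruence blocks propagate along path-patterns, together with the bookkeeping that keeps the passive subinstances in $\mathcal{P}$ synchronized with the reductions performed on the active instance. Making precise that this synchronization leaves $\mathcal{I}'$ \emph{inside the passive family} — rather than merely inside some consistent subinstance — is the step I would treat most carefully, since it is what licenses the inductive use of the proposition to drive the full reduction to a solution.
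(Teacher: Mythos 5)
Your proposal follows essentially the same route as the paper's own proof: pick a maximal congruence on the current domain of $x_{i+1}$, use the $(A,\theta_A)$-test instance and its surviving strands, invoke Lemma~\ref{rect} to show non-relevant variables are unaffected, restrict to a surviving block, and iterate until the domain is a singleton. If anything you spell out more than the paper does (strand selection, the role of Corollary~\ref{full}, and the re-enforcement of $(2,3)$-consistency); the only point the paper makes that you leave implicit is that the already-processed singleton variables $x_1,\ldots,x_i$ remain non-relevant because each fixed value is linked to every element of every remaining domain.
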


\begin{proof} If $|\mathbb{S}_{x_{i+1}}|=1$, the statement holds trivially. Therefore, we may assume that $\mathbb{S}_{x_{i+1}}$ is non-trivial.

Let $\alpha_{x_{i+1}}$ be a maximal congruence on $\mathbb{S}_{x_{i+1}}$. The strands of the test instance $\mathbb{S}_{x_{i+1}}/\alpha_{x_{i+1}}$ are in $\mathcal{A}$ but what remains to be shown, first of all, is that no non-relevant variable in the test instance becomes relevant when taking the intersection with $\mathcal{I}$. By Lemma \ref{rect}, if $j>i+1$ and $j$ was non-relevant in the test instance, the subdirect product
$$C'\leq_{sp} \mathbb{S}_{x_{i+1}}\times S_{x_j}$$
is the full direct product, since $S_{x_j}\leq R^+_{i+1,j}(S_{x_{i+1}})$.
Furthermore, suppose that the processed variables $x_1,\ldots,x_i$ have been reduced to $a_1,\ldots,a_i$. Since $a_k$, ($k=1,\ldots,i$), is linked to all vertices of all domains $\mathbb{S}_l$ ($l=i+1,\ldots,n$), $x_1,\ldots, x_i$ are non-relevant variables in all the remaining passive subinstances in $\mathcal{P}$.

Next, we can take any block $B$ of $S_{x_{i+1}}/\alpha_{x_{i+1}}$, replace $S_{x_{i+1}}$ with $B$ and continue the reduction in the coordinate $x_{i+1}$ until we get an $(i+1)$-processed instance, which is 1-consistent.

\end{proof}

\subsection{Algorithm}

In conclusion, we summarize the polynomial algorithm which, on input an instance $\mathcal{I}$ of $\CSP (\mathbf{A})$, where the algebra $\mathbb{A}$ is Maltsev, determines whether the instance has a solution and outputs one if it exists:

\begin{algorithm}[H]
   \caption{Algorithm for solving CSP instances for a Maltsev algebra $\mathbb{A}$}
    \begin{algorithmic}[1]
    
        \State if $p$ is the maximum arity of the operation of the constraint language $\mathbf{A}$, run the $(2\lceil\frac{p}{2}\rceil,3\lceil\frac{p}{2}\rceil)$-consistency algorithm on $\mathcal{I}$
                   \If { no $(2\lceil\frac{p}{2}\rceil,3\lceil\frac{p}{2}\rceil)$-consistency}
                        \State  output \emph{`no solution'}
                   \Else
                         \State verify Maltsev consistency on $\mathcal{I}$
                                  \If { no Maltsev consistency}
                                       \State output \emph{ `no solution'}
                                 \Else
                                        \State Enforce (2,3)-consistency on $\mathcal{I}$
                                        \State Generate the system $\mathcal{P}$ of polynomially many passive subinstances 
                                        \State output a solution using reductions via maximal congruences.
                                  \EndIf
                    \EndIf
\end{algorithmic}
\end{algorithm}

\section{Conclusion}

We have presented an algorithm for solving constraint satisfaction problems over finite templates with Maltsev polymorphisms, which is based on a new type of consistency check,the Maltsev consistency, which solves localized subproblems over simple algebras in the variety generated by the parametrizing algebra $\mathbb{A}$,appearing in the computation tree and, effectively, removes the unsuccessful branches leading to no solutions.

One of the questions which have inspired the work presented in this article was the question which extensions of first-order logic, which are thought of as potential candidates for capturing polynomial time on finite relational structures, have the capability of expressing the solvability of CSPs with Maltsev templates. One such candidate logic is the Choiceless Polynomial Time with Counting (for more details, see \cite{gradel2015polynomial} ) In that article, the following problem was posed:

\begin{openproblem} Are all CSPs with Maltsev constraints expressible in the Choiceless Polynomial Time with Counting?
\end{openproblem}

The affirmative answer to this question would imply that the isomorphism of graphs with bounded colour size is expressible in this logic, via a reduction given in \cite{berkholz2015limitations}. We speculate that the answer to this problem is affirmative; namely, the algorithm is based on solving a polynomial number of ``atomic" problems which are either definable in Symmetric Datalog, or are cyclic systems of linear equations over a finite field. Since Choiceless Polynomial Time with Counting subsumes the expressive power of LFP and cyclic systems of linear equations are expressible in CPT+C (\cite{pakusa2015}), there is strong evidence that this indeed should be the case.

\bibliography{digraph_reduction} 
\bibliographystyle{siam}

\end{document}